\documentclass[conference]{IEEEtran}
\usepackage{cite}
\usepackage{graphicx}
\usepackage{amsfonts}
\usepackage{caption}
\usepackage{subcaption}
\usepackage{pifont}
\usepackage{graphicx}

\usepackage{amssymb}
\usepackage{amsmath, amsthm}
\usepackage{wrapfig}
\usepackage{epsfig}
\usepackage[usenames,dvipsnames]{xcolor}
\usepackage{mathdots}
\usepackage{tikz}
\usepackage{bbm}
\ifCLASSINFOpdf
\else
\fi

\IEEEoverridecommandlockouts
\begin{document}
\newtheorem{theorem}{Theorem}
\newtheorem{example}{Example}
\newtheorem{claim}{Claim}
\newtheorem{corollary}{Corollary}
\newtheorem{conjecture}{Conjecture}
\newtheorem{definition}{Definition}
\newtheorem{lemma}{Lemma}
\newtheorem {remark}{Remark}
\newtheorem {proposition}{Proposition}
\newcommand{\proofsketch}{\vspace*{-1ex} \noindent {\emph{ Proof Sketch:} }}

\allowdisplaybreaks 
%
\title{On the Capacity Achieving Probability Measures for Molecular Receivers}

\author{Mehrdad Tahmasbi, Faramarz Fekri\\

School of Electrical and Computer Engineering\\ Georgia Institute of Technology\\ Atlanta, GA 30332 \\
Email: \{tahmasbi, fekri\}@ece.gatech.edu
\thanks{This material is based upon work supported by the National Science
Foundation under Grant No. CNS-111094}

}


%


\maketitle

\begin{abstract}
In this paper, diffusion-based molecular communication with ligand receptor receivers is studied. Information messages are assumed to be encoded via variations of the concentration of molecules. The randomness in the ligand reception process induces uncertainty in the communication; limiting the rate of information decoding. We model the ligand receptor receiver by a set of finite-state Markov channels and study the general capacity of such a receiver. Furthermore, the i.i.d. capacity of the receiver is characterized as a lower bound for the general capacity. It is also proved  that a finite support probability measure can achieve the i.i.d. capacity of the receiver. Moreover, a bound on the number of points in the support of the probability measure is obtained.
\end{abstract}


%
\IEEEpeerreviewmaketitle

\section{Introduction}
Inspired by communication paradigms in microorganisms, recent research has focused on to designing communication systems that convey messages using molecules instead of electromagnetic waves \cite{IanFernandoCristina, TadashiAndrew, HamidrezaGholamali, chris,tadashi}. These methods can be beneficial in environments such as human body where the use of electromagnetic waves is limited. Among several ways of encoding messages into molecules, the most common and well-studied method  is encoding via concentration of molecules. As such, the transmitter alters the concentration in a shared medium in order to induce distinct actions in the receiver  corresponding to different messages at the transmitter.

In this new context, like every traditional communication systems, the notion of capacity can be defined as the fundamental upper bound on the rates for which reliable communication is possible. In the past, several authors attempted to  find the capacity of the both diffusion channel as well as ligand receptors under various assumptions. In \cite{ArashMohsen2}, authors considered a memoryless model for the ligand receptors and found that Jeffery's prior is the capacity achieving distribution. They also represented some numerical results for the case with memory. In \cite{ArashMohsen}, authors used two states Markov channel model for the diffusion channel and computed its capacity using the method Shannon suggested for the telegraph channel in \cite{Shannon}. In \cite{HamidrezaGholamali}, the diffusion channel memory is modeled by an LTI-poisson process and the capacity is studied.

However, the most relevant work to what is studied in this paper is presented in \cite{PeterAndrew}. The authors in \cite{PeterAndrew} presumed that the input is the concentration of molecules at the receiver, and the receiver (viewed as a channel) is a single ligand receptor modeled by a two-state Markov channel. Using this model, \cite{PeterAndrew} showed that for a discrete input, the capacity achieving distribution is i.i.d.. Further, the distribution takes only two values, the minimum and maximum concentrations. They also proved that feedback does not increase the capacity in this setup.

In the Markov model given for the binding channel in \cite{PeterAndrew}, the state of the channel is the output in the previous epoch. In these channels, the capacity is analysed in \cite{ChenBerger} when we have feedback. On the other hand, the capacity of the channel in which the transition of the output is independent from the transition of the state of the channel is studied in \cite{MushkinDavid} and \cite{GoldsmithVaraiya}. 

In this paper, we generalized the aforementioned model from two perspectives. First, starting from a general input (continuous, discrete, or mixed), we show that there always exists a discrete distribution which achieves the capacity. In addition, we relax the assumption of the receiver having only a single receptor, and instead we allow existing of any number of the ligand receptors. Under this assumption, the number of possible outputs and states of the channel grow exponentially in term of the number of receptors. Therefore, describing the exact capacity and the capacity-achieving distribution would be difficult. Hence, instead of finding the exact expression for these objects, we tried to found some fundamental results that help us to describe them.

The paper is organized as follow: in Section \ref{sec:notations}, our notations are introduced and the physical problem including the transmitter, the diffusion channel, the and ligand receptor receiver are described. In Section \ref{sec:model}, the stochastic model used in this work is explained precisely. The main results of the paper came in Section \ref{sec:result}.

\begin{figure} 
\begin{center}
\includegraphics[scale=0.5]{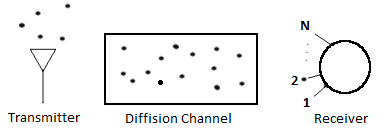}
\end{center}
\caption{\small Diffusion Channel and a Receiver with N Ligand Receptors\normalsize}
\label{fig:diffusion:channel}
\end{figure}

\section{Notations and Preliminaries} \label{sec:notations}
Assuming a sequence $(x_s, x_{s+1}, x_{s+2}, \cdots)$, we use $x_i$ to denote a single element of the sequence and $x^i$ to denote subsequence $(x_s, x_{s+1}, \cdots, x_i)$ and $x_i^j$ to show $(x_i, x_{i+1}, \cdots, x_j)$. We show the whole sequence by $x^\infty$.
Random variables are displayed by capital letters. If $X$ is a random variable $p_X(x)$ shows the probability density or mass function depending on $X$ is a continuous or discrete.
In addition, $H_2(x)$ is the binary entropy function and defined as $-x\log_2(x)-(1-x)\log_2(1-x)$. The $n^{th}$ derivative of a function $f$ is also represented by $f^{(n)}$.

There are different models for a diffusion-based communication system each of which capture some physical subtleties of the transmitter, the channel and the receiver. As depicted in Figure \ref{fig:diffusion:channel}, this paper assumes that  the transmitter is a point source emitting molecules at a controllable rate, $F(t)$. These molecules diffuse into a shared medium between the receiver and the transmitter. The diffusion phenomena can be described by Fick's Second Law:
\begin{equation}
 \frac{\partial c(x, t)}{\partial t}=D\nabla^2c(x,t)+\delta(x)F(t),
\end{equation}
where $c(x, t)$ is the concentration of molecules at position $x$ and time $t$, and $D$ is the diffusion coefficient of the medium. If the receiver is located at position $x_r$, by solving above PDE, one can find the concentration of molecules at $x_r$, $c_r(t) = c(x_r, t)$. Similar to \cite{ArashMohsen, MassimilianoIan}, in order to find $c_r(t)$, we find the impulse response (i.e., the solution when $F(t) = \delta(t))$:
\begin{equation}
h(t) = \frac{1}{4\pi Dt}\exp(-\frac{|x_r|^2}{4Dt})u(t).
\end{equation}
The solution for a general input is given as:
\begin{align} \label{eq:fick:solution}
&c_r(t) = h(t) * F(t)\nonumber\\
 &= \int^t_0 F(\tau)\frac{1}{4\pi D(t-\tau)}\exp(-\frac{|x_r|^2}{4D(t-\tau)})d\tau.
\end{align}
Additionally, the receiver is assumed as a single cell containing $N$ ligand receptors. Each receptor has two states, referred as bound ($B$) and unbound ($U$), indicating whether a ligand has bound to the receptor. To describe the stochastic behavior of a single receptor, analogous to \cite{PeterAndrew}, \emph{master equations} introduced in \cite{Desmond} can be used. If $p_ i(t)$ shows the probability that the $i^{th}$ receptor is in state $B$ at time $t$, then:
\begin{equation}\label{eq:master}
\frac{dp_i}{dt}=k_+c_r(t)(1-p_i(t))-k_-p_i(t),
\end{equation}
where $k_+$ and $k_-$ are two constants proportional to the rate of binding and unbinding reactions. Another important feature is that receptors are insensitive to the input when they are bound.

In this paper, we assume a discrete time model for the capacity study in which the input and the output of the system are sampled. More precisely, if we fix a sampling period $\Delta$, and  assume that the rate of the molecule emission by the transmitter is equal to a constant $F_n$ for time interval $n\Delta\leq t < (n+1)\Delta$ for any $n=0, 1, 2, \cdots$, the concentration at the receiver is obtained via \eqref{eq:fick:solution} as:
\begin{align*}
&c_r(m\Delta) \\
 &= \sum_{n=0}^m\int^{(n+1)\Delta}_{n\Delta} \frac{F(\tau)}{4\pi D(m\Delta-\tau)}\exp(-\frac{|x_r|^2}{4D(m\Delta-\tau)})d\tau\\
&=\sum_{n=0}^m F_nh_{m-n},
\end{align*}
where $h_n$ is defined as:
\begin{equation}
\int^{\Delta}_0 \frac{1}{4\pi D(n\Delta-\tau)}\exp(-\frac{|x_r|^2}{4D(n\Delta-\tau)})d\tau
\end{equation}

Further, we suppose that the input (over which we study capacity) is the concentration of the molecules at the receiver. The rational for our setup is that there is a one to one mapping between the rate of molecule emission at transmitter, $F^\infty$, and the sampled concentration of molecules at the receiver. It is due to the fact that $h_n$ is the integral of a positive function, and therefore it is non-zero. Hence, by induction, we can show that $F^\infty$ can be obtained uniquely from the $c_r(\Delta), c_r(2\Delta), \cdots$. Nevertheless, it should be mentioned that the constraints of the input are commonly given at the transmitter, and they should be translated to appropriate ones at receiver which may not be of the form assumed in this paper. However, this assumption helps us to find  theoretical results for the capacity of a molecular receiver.

\section{Stochastic Model of the Channel} \label{sec:model}
In this paper, we focus on the capacity  of the ligand receptor receiver. As such, to the rest of the paper, our definition of the channel only includes the receiver without diffusion channel from the transmitter to the receiver.
Since our model is discrete time, all variables of the channel at time $i\Delta$ are shown by index $i$. We call each interval of $[i\Delta, (i+1)\Delta)$ as an epoch.

\subsubsection{Input} The input of the channel denoted by $X^\infty=(X_1, X_2, \cdots)$ is the  sampled concentrations present at the receiver i.e., $X_n = c_r(n\Delta)$. These concentrations are limited to constraint $0\leq X_i \leq M$, where $M$ is a positive real number. We use $\mathcal{X}$ to denote the input alphabet, which is the continuous interval $[0, M]$.

\subsubsection{Output} The output of the channel shown by $Y^\infty=(Y_0, Y_1, Y_2, \cdots)$ is the sampled state of all $N$ receptors. In other words, if we define $Y_{i,j}$ as the state of receptor $j$ at time $i\Delta$, then $Y_i=(Y_{i,1}, \cdots, Y_{i,N})\in\{B,U\}^N$. Moreover, $Y_0$ is the initial state of the receptors and has an arbitrary distribution. We use $\mathcal{Y}$ to show the output alphabet $\{B,U\}^N$.

\subsubsection{Channel State Transition} We model the $N$ ligand receptors as $N$ Markov channels with independent state transitions.
 For the $j^{th}$ receptor at the epoch $i$, if $Y_{i,j}=B$, it would go to the unbound state at the epoch $i+1$ with probability $0<\beta<1$, where $\beta$ is independent of the input. Likewise, if $Y_{i,j}=U$, the probability that $Y_{(i+1),j}=B$ is a function of the input. This function denoted by $\alpha:\mathcal{X}\to (0,1)$  has two following properties: $\alpha$ is a strictly increasing function, and $\alpha$ is continuous.

The above two properties are sufficient to obtain all the results presented in this paper. However, it can be shown that, in our setup, we have:
\begin{equation}
\alpha(x) = \frac{k_+x}{k_-+k_+x}
\end{equation}

The following definition will help us to express our results more briefly:

\begin{definition}
If $X$ is a random variable on $\mathcal{X}$, we define the $i^{th}$ moment of $\alpha(X)$ as $m_i^X\triangleq\mathbb{E}[\alpha^i(X)]$.
For $1\leq i_1 \leq i_2$, we also define $m^X_{i_1:i_2}\triangleq(m_{i_1}^X, \cdots, m_{i_2}^X)$.
\end{definition}
\subsubsection{Capacity}
To define capacity, first we need the following definition:

\begin{definition} \textbf{Achievable Rates}\\
We say a rate $R > 0$ is achievable, if for any $\epsilon>0$, there exists an integer $n_0>0$ such that for all integers $n \geq n_0$, we can find  $K$ codewords $w_1, \cdots, w_K\in \mathcal{X}^n$ and $K$ decoding subsets $D_1, \cdots, D_K\subset\mathcal{Y}^n$ so that  $\mathbb{P}(Y^n_1\in D_i|X^n=w_i) > 1 - \epsilon$ for all $1\leq i \leq K$,.
\end{definition}
Since $|\mathcal{Y}|<\infty$, it is easy to check that the set of all achievable rates is bounded. Accordingly, we can define:
\begin{definition}\label{def:capacity} \textbf{Channel Capacity}\\
\begin{equation}
C \triangleq \sup \{R| R \text{ is an achievable rate}\}
\end{equation}
\end{definition}

\begin{definition} \label{def:iid:capacity}
\begin{equation}
C_{iid}\triangleq\lim_{n\to\infty}\max_{p_{X^n}\in \mathcal{P}^n_{iid}}\frac{1}{n}I(X^n;Y^n),
\end{equation}
where $P_{X^n}\in\mathcal{P}^n_{iid}$ if and only if $X_1, X_2, \cdots, X_n$ is an i.i.d. sequence.
\end{definition}

\begin{definition}\label{def:markov:capacity}
\begin{equation}
C_{m}\triangleq\lim_{n\to\infty}\max_{p_{X^n}\in \mathcal{P}^n_{m}}\frac{1}{n}I(X^n;Y^n),\end{equation}
where $P_{X^n} \in \mathcal{P}^n_{m}$ if and only if there exists a finite-state stationary Markov chain $S_1, S_2, \cdots, S_n$ with state space $\mathcal{S}$ and a function $\phi:\mathcal{S}\to\mathcal{X}$ such that $X_i=\phi(S_i)$ for all $1\leq i \leq n$.
\end{definition}

\section {Results} \label{sec:result}
To begin with, the next proposition characterizes some basic properties of the channel for i.i.d. input:
\begin{proposition}\label{lem:basic:entropy}
Suppose the input of the channel is an i.i.d. sequence $X^\infty$ and the corresponding output is $Y^\infty$. Then:
\begin{enumerate}
\item $Y^\infty$ is an aperiodic irreducible Markov chain.
\item There exists a unique distribution for $Y_0$ such that $Y^\infty$ forms a stationary sequence, and for any other distribution of $Y_0$, $P_{Y_n}$ tends to the stationary distribution as $n$ tends to $\infty$.
\item If $Y^\infty$ is stationary, $H(Y_i|Y_{i-1})$ depends only on $m_{1:N}^X$
\item If $Y^\infty$ is stationary, 
\begin{align}
&H(Y_i|X_i, Y_{i-1}) =\nonumber \\
& N[p_{Y(i-1),1}(B)H_2(\beta)+ p_{Y(i-1),1}(U)\mathbb{E}(H_2(\alpha(X_i)))] \label{eq:second:entropy}
\end{align}
\end{enumerate}
\end{proposition}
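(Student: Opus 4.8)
The plan is to treat the four parts in order, all resting on the observation that, for i.i.d.\ input, the $N$ receptors evolve conditionally independently given the previous joint state $Y_{i-1}$ and the current input $X_i$, and that $X_i$ is independent of $Y_{i-1}$ (the latter being a function of $Y_0$ and $X^{i-1}$). Throughout I abbreviate the single-receptor transition probabilities by $T_{B,U}=\beta$, $T_{B,B}=1-\beta$, $T_{U,B}(x)=\alpha(x)$, $T_{U,U}(x)=1-\alpha(x)$.

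For part 1 I would first write the one-step law as $\mathbb{P}(Y_i=y'\mid Y^{i-1}=y^{i-1})=\int_{\mathcal X}\prod_{j=1}^N T_{y_{i-1,j},\,y'_j}(x)\,p_X(x)\,dx$. This identity uses that, conditioned on $Y^{i-1}$ and $X_i$, the receptors transition independently with probabilities $T_{y_{i-1,j},\,y'_j}(X_i)$, together with the independence of $X_i$ from the past. The right-hand side depends on the history only through $Y_{i-1}$, which is exactly the Markov property. Since $0<\beta<1$ and $\alpha(\cdot)\in(0,1)$, every factor $T_{\cdot,\cdot}(x)$ is strictly positive for every $x$, hence every entry $\mathbb{P}(Y_i=y'\mid Y_{i-1}=y)$ is strictly positive; a finite chain with a strictly positive transition matrix is irreducible and aperiodic. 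Part 2 is then the standard ergodic theorem for finite irreducible aperiodic Markov chains: a unique stationary law exists, and the marginals converge to it from any initial distribution of $Y_0$.

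For part 3 the key is a degree count. Expanding $\prod_{j=1}^N T_{y_{i-1,j},\,y'_j}(x)$, every receptor that is unbound at time $i-1$ contributes a factor $\alpha(x)$ or $1-\alpha(x)$ while every bound receptor contributes a constant in $\beta$; the product is therefore a polynomial in $\alpha(x)$ of degree at most $N$. Integrating against $p_X$ turns each transition probability into a linear combination of $m_0^X=1,m_1^X,\cdots,m_N^X$, so the whole transition matrix $P$ is a function of $m_{1:N}^X$ alone. By part 2 the stationary law $\pi$ is the unique solution of $\pi P=\pi$, $\sum_y\pi_y=1$, hence a (rational) function of $P$ and thus of $m_{1:N}^X$. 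Since $H(Y_i\mid Y_{i-1})$ is assembled from $P$ and $\pi$, it depends only on $m_{1:N}^X$.

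For part 4 I would condition on $(X_i,Y_{i-1})$, under which the receptors are independent, giving $H(Y_i\mid X_i=x,Y_{i-1}=y)=\sum_{j=1}^N H(Y_{i,j}\mid X_i=x,Y_{i-1,j}=y_j)$, where the $j$-th term is $H_2(\beta)$ if $y_j=B$ and $H_2(\alpha(x))$ if $y_j=U$. Averaging over $Y_{i-1}$ and over $X_i$, and using $X_i\perp Y_{i-1}$, yields $\sum_{j=1}^N[\mathbb{P}(Y_{i-1,j}=B)H_2(\beta)+\mathbb{P}(Y_{i-1,j}=U)\,\mathbb{E}(H_2(\alpha(X_i)))]$. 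Because the receptors are statistically identical, under stationarity their marginals coincide, so $\mathbb{P}(Y_{i-1,j}=B)=p_{Y(i-1),1}(B)$ for every $j$ and the sum collapses to \eqref{eq:second:entropy}. The only genuinely delicate point is the degree-$N$ polynomial argument of part 3, namely verifying that no moment beyond order $N$ enters the transition matrix and that the stationary law inherits this dependence; parts 1 and 2 are standard Markov-chain facts and part 4 is a direct entropy decomposition.
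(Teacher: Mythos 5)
Your proposal is correct and follows essentially the same route as the paper: the Markov property via independence of $X_i$ from the past, positivity of all transition entries for irreducibility and aperiodicity, the degree-$\le N$ polynomial expansion of the transition probabilities in $\alpha(x)$ (the paper phrases this with explicit counts $N_1,\dots,N_4$ of receptor transitions) to reduce everything to $m_{1:N}^X$, and the per-receptor entropy decomposition for part 4. No gaps.
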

\begin{proof}
\begin{enumerate}
\item 
To check that $Y^\infty$ is a Markov chain, we use the fact that $X_i$ is independent of $Y_0, Y_1, \cdots, Y_{i-1}$. So:
\begin{align*}
&p_{Y_i|Y_0, \cdots, Y_{i-1}}(y_i|y_0, \cdots, y_{i-1})\\
&=\mathbb{E}_{X_i|Y_0, \cdots, Y_{i-1}}(p_{Y_i|Y_0, \cdots, Y_{i-1}, X_i}(y_i|y_0, \cdots, y_{i-1}, X_i))\\
&=\mathbb{E}_{X_i}(p_{Y_i|Y_{i-1}, X_i}(y_i|y_{i-1}, X_i))\\
&=p_{Y_i|Y_{i-1}}(y_i|y_{i-1}) .
\end{align*}
Furthermore, since $\alpha(x),\beta \in(0,1) $, the transition between every two states has positive probability. Hence, $Y^\infty$ is both irreducible and aperiodic.
\item It is a well-known fact about Markov chains.
\item 
Let $x$ be a fixed input and $y_0, y_1 \in \{U, B\}^N$ are two possible outputs. Also, define:
\begin{align*}
&N_1 = \#\{j|y_{0,j}=U, y_{1,j}=U\}\\
&N_2 = \#\{j|y_{0,j}=U, y_{1,j}=B\}\\
&N_3 = \#\{j|y_{0,j}=B, y_{1,j}=U\}\\
&N_4 = \#\{j|y_{0,j}=B, y_{1,j}=B\}.
\end{align*}
Then, $$p_{Y_i|Y_{i-1},X_i}(y_1|y_0,x)=(1-\alpha(x))^{N_1}\alpha(x)^{N_2}\beta^{N_3}(1-\beta)^{N_4}.$$
Because $X_i$ is independent of $Y_{i-1}$:
\begin{align*}
&p_{Y_i|Y_{i-1}}(y_1|y_0)= \nonumber \\
&\mathbb{E}_{X_i}((1-\alpha(X_i))^{N_1}\alpha(X_i)^{N_2}\beta^{N_3}(1-\beta)^{N_4}).
\end{align*}
Since  $N_1+N_2\leq N$, the expression inside $\mathbb{E}_{X_i}$ is a polynomial of degree at most $N$ of $\alpha(x)$. Therefore, $p_{Y_i|Y_{i-1}}(y_1|y_0)$ is a function of $m_{1:N}^X$.

Thus, if we show the stationary distribution by $\pi$, it depends only on $m_{1:N}^X$, and hence,
\begin{equation*}
H(Y_1|Y_0) = \sum_{y\in \{U, B\}|^N} \pi(y) H(Y_1|Y_0 = y).
\end{equation*}
\item 
\begin{align*}
&H(Y_i|X_i, Y_{i-1})\\
&=NH(Y_{i,1}|X_i, Y_{(i-1),1})\\
&=Np_{Y(i-1),1}(B)\mathbb{E}_{X_i}(H(Y_{i1}|X_i, Y_{(i-1),1}=B))+\\
&Np_{Y(i-1),1}(U)\mathbb{E}_{X_i}(H(Y_{i1}|X_i, Y_{(i-1),1}=U))\\
&= Np_{Y(i-1),1}(B)H_2(\beta)+\\
&Np_{Y(i-1),1}(U)\mathbb{E}_{X_i}(H_2(\alpha(X_i))).
\end{align*}
\end{enumerate}
\end{proof}

\begin{theorem} \label{thm:finite:support}
There exists a discrete distribution for the input which achieves the i.i.d. capacity of the channel, i.e., $C_{iid}$.
\end{theorem}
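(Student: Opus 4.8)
The plan is to single-letterize $C_{iid}$ and then exploit Proposition \ref{lem:basic:entropy} to show that the resulting objective depends on the input distribution only through finitely many moment-type functionals; an extreme-point (Carathéodory) argument on the associated moment curve then produces a discrete maximizer. First I would reduce the limiting expression in Definition \ref{def:iid:capacity} to a single-letter form. For an i.i.d.\ input, $(X^\infty,Y^\infty)$ is jointly Markov and, by parts (1)--(2) of Proposition \ref{lem:basic:entropy}, $Y^\infty$ is an aperiodic irreducible Markov chain converging to its unique stationary law irrespective of the distribution of $Y_0$. Writing $I(X^n;Y^n)=H(Y^n)-H(Y^n|X^n)$ and using the state-transition structure to get $H(Y_i|Y^{i-1},X^n)=H(Y_i|Y_{i-1},X_i)$, both Ces\`aro averages converge to their stationary values, yielding
\begin{equation*}
C_{iid}=\sup_{p_X}\big[H(Y_i|Y_{i-1})-H(Y_i|X_i,Y_{i-1})\big],
\end{equation*}
where all entropies are evaluated under the stationary distribution induced by $p_X$.

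Next I would show that this objective is a function of a fixed finite-dimensional statistic of $p_X$. By part (3) of Proposition \ref{lem:basic:entropy}, $H(Y_i|Y_{i-1})$ and the stationary single-receptor marginals $p_{Y(i-1),1}(B),p_{Y(i-1),1}(U)$ depend on $p_X$ only through $m^X_{1:N}$; by part (4) and \eqref{eq:second:entropy}, $H(Y_i|X_i,Y_{i-1})$ depends on $p_X$ only through these marginals and through $\mathbb{E}[H_2(\alpha(X))]$. Hence there is a function $g:\mathbb{R}^{N+1}\to\mathbb{R}$ with
\begin{equation*}
H(Y_i|Y_{i-1})-H(Y_i|X_i,Y_{i-1})=g\big(m_1^X,\dots,m_N^X,\mathbb{E}[H_2(\alpha(X))]\big),
\end{equation*}
and $g$ is continuous because the stationary law is a rational, hence continuous, function of the transition probabilities, which are themselves polynomials in $m^X_{1:N}$.

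Finally I would apply the convex-geometry step. Define $\gamma(x)=(\alpha(x),\alpha^2(x),\dots,\alpha^N(x),H_2(\alpha(x)))\in\mathbb{R}^{N+1}$ and $T(\mu)=\int_0^M\gamma\,d\mu$. Since $\alpha$ is continuous and $[0,M]$ is compact, $\Gamma=\gamma([0,M])$ is a compact connected curve, and the image of all probability measures under $T$ is exactly $\mathrm{conv}(\Gamma)$, a compact convex set. Thus $C_{iid}=\max_{v\in\mathrm{conv}(\Gamma)}g(v)$, attained at some $v^\star$ by compactness of $\mathrm{conv}(\Gamma)$ and continuity of $g$. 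By Carath\'eodory's theorem in its Fenchel--Bunt form (using connectedness of $\Gamma$), $v^\star$ is a convex combination of at most $N+1$ curve points $\gamma(x_1),\dots,\gamma(x_{N+1})$. The discrete measure $\mu^\star=\sum_k\lambda_k\delta_{x_k}$ then satisfies $T(\mu^\star)=v^\star$, so $g(T(\mu^\star))=C_{iid}$, which proves the claim and simultaneously bounds the support size.

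The main obstacle I anticipate is making the single-letterization fully rigorous: interchanging the $n\to\infty$ limit with the supremum over $p_X$ and controlling the contribution of the initial state $Y_0$ require either a uniform-convergence estimate or starting the chain in stationarity, and one must confirm that the maximum over $p_X$ and the maximum of the single-letter functional coincide. The accompanying continuity of $g$ (needed for a maximizer to exist) rests on a matrix-perturbation argument for the stationary distribution. Once these analytic points are settled, the extreme-point argument is routine.
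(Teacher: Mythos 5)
Your proposal is correct and follows essentially the same route as the paper: decompose $I(X^n;Y^n)$ via the Markov structure into $I(Y_0;X^n)+\sum_i H(Y_i|Y_{i-1})-\sum_i H(Y_i|Y_{i-1},X_i)$, single-letterize under the stationary initial law, observe that the resulting expression depends on $p_X$ only through the $N+1$ functionals $m^X_{1:N}$ and $\mathbb{E}[H_2(\alpha(X))]$, and invoke a Carath\'eodory-type moment-matching step (the paper's Lemma \ref{lem:moment:problem}, giving support size $N+2$; your Fenchel--Bunt variant gives $N+1$). Your additional care about the existence of a maximizer via compactness of $\mathrm{conv}(\Gamma)$ and continuity of $g$ is a welcome tightening of a point the paper glosses over, but it does not change the argument's substance.
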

\begin{proof}
Suppose the i.i.d. capacity is achieved for the i.i.d. input $X^\infty$, and the corresponding output is $Y^\infty$. By Proposition \ref{lem:basic:entropy}, $Y^\infty$ is a Markov chain. We also know that given $Y_{i-1}$ and $X_i$, $Y_i$ is independent of the other components of the input and output. Hence, we have:
\begin{align*}
&I(X^n;Y^n) =H(Y^n) - H(Y^n|X^n) 
\\
&=I(Y_0;X^n) + \sum_{i=1}^nH(Y_i|Y^{i-1}) - \sum_{i=1}^nH(Y_i|Y^{i-1},X^n)\\
&=I(Y_0;X^n) + \sum_{i=1}^nH(Y_i|Y_{i-1}) - \sum_{i=1}^nH(Y_i|Y_{i-1},X_i).
\end{align*}

Initially, we suppose that $Y_0$ has the stationary distribution. Then $Y^\infty$ is a stationary process, and the above expression is equal to:
\begin{align*}
 I(Y_0;X^n) + nH(Y_1|Y_0) - nH(Y_1|Y_0,X_1)\\
\end{align*}
Thus:
\begin{align}
&\lim_{n\to \infty}\frac{1}{n}I(X^n;Y^n)\nonumber\\
&=\lim_{n\to\infty}\frac{1}{n} I(Y_0;X^n) + H(Y_1|Y_0) - H(Y_1|Y_0,X_1)\nonumber\\
&=H(Y_1|Y_0)-H(Y_1|Y_0,X_1) \label{eq:stationary:I}.
\end{align}
Note that by Proposition \ref{lem:basic:entropy}, the first term in \eqref{eq:stationary:I} is a function of $m_{1:N}^{X_1}$. We can also write the second term $H(Y_1|Y_0,X_1)$ as $\mathbb{E}_{X_1}(g(X_1))$ where $g(x) = H(Y_1|Y_0,X_1=x)$. Thus,  $\lim_{n\to \infty}\frac{1}{n}I(X^n;Y^n)$ just depends on the expected value of $N+1$ functions of $X_1$. By Lemma \ref{lem:moment:problem} in Appendix, there exists another discrete i.i.d. input $\tilde{X}^\infty$ that takes at most $N+2$ values and the expected value of these $N+1$ functions remain unchanged. Therefore, if we show the output corresponding to $\tilde{X}^\infty$ by $\tilde{Y}^\infty$, then:
\begin{align*}
\lim_{n\to \infty}\frac{1}{n}I(X^n;Y^n) = \lim_{n\to \infty}\frac{1}{n}I(\tilde{X}^n;\tilde{Y}^n)
\end{align*}
Similar to \cite{GoldsmithVaraiya} we will prove that the above limit is independent of the $Y_0$ distribution which justifies our assumption that $Y_0$ has the stationary distribution.
\end{proof}

\begin{remark}
Standard way of using the Caratheodery technique gives exponential bounds on $|\mathcal{X}|$, but from the above proof $|\mathcal{X}|$ is bounded by $N+2$ which is a significant improvement and simplifies the problem of finding the capacity.
\end{remark}

\begin{theorem} \label{thm:single:receptor}
For $N=1$, the i.i.d. capacity achieving distribution of the channel takes two values of $x_{\min}=0$ and $x_{\max}=M$.
\end{theorem}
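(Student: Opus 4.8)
The plan is to collapse the whole optimization onto a single scalar parameter, the mean $m_1^X=\mathbb{E}[\alpha(X)]$, and then finish with a concavity argument for $H_2$. First I would invoke Theorem~\ref{thm:finite:support} together with the identity \eqref{eq:stationary:I}, so that under the stationary law of $Y_0$ the i.i.d.\ rate is
\[
\lim_{n\to\infty}\frac{1}{n}I(X^n;Y^n)=H(Y_1\mid Y_0)-H(Y_1\mid Y_0,X_1).
\]
By Proposition~\ref{lem:basic:entropy}(3), for $N=1$ the term $H(Y_1\mid Y_0)$ depends on the input law only through $m_1^X$. Moreover, the stationary marginal of $Y_0$ comes from the averaged two-state chain with $U\to B$ probability $m_1^X$ and $B\to U$ probability $\beta$, giving $p_{Y_0,1}(U)=\beta/(m_1^X+\beta)$ and $p_{Y_0,1}(B)=m_1^X/(m_1^X+\beta)$; hence it too is a function of $m_1^X$ alone.

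Next I would apply Proposition~\ref{lem:basic:entropy}(4), which for $N=1$ reads
\[
H(Y_1\mid Y_0,X_1)=p_{Y_0,1}(B)\,H_2(\beta)+p_{Y_0,1}(U)\,\mathbb{E}[H_2(\alpha(X_1))].
\]
Combining the two displays, the rate is a quantity determined entirely by $m_1^X$ minus the single distribution-dependent piece $p_{Y_0,1}(U)\,\mathbb{E}[H_2(\alpha(X_1))]$, and $p_{Y_0,1}(U)>0$. Therefore, holding $m_1^X=a$ fixed, maximizing the rate is equivalent to minimizing $\mathbb{E}[H_2(\alpha(X))]$ subject to $\mathbb{E}[\alpha(X)]=a$ and $\alpha(X)\in[\alpha(0),\alpha(M)]$.

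The crux is then an elementary extremal fact. Since $H_2$ is strictly concave, it lies strictly above the affine function $\ell$ that agrees with it at the two endpoints $\alpha(0)$ and $\alpha(M)$. Writing $Z=\alpha(X)$, for any law of $Z$ with mean $a$ we get $\mathbb{E}[H_2(Z)]\geq \mathbb{E}[\ell(Z)]=\ell(a)$, with equality if and only if $Z$ is supported on $\{\alpha(0),\alpha(M)\}$. Because $\alpha$ is a strictly increasing, continuous bijection of $[0,M]$ onto $[\alpha(0),\alpha(M)]$, this support condition is exactly $X\in\{0,M\}=\{x_{\min},x_{\max}\}$. Thus for each admissible $a$ the optimal input is the two-point law on $\{0,M\}$ whose weights realize mean $a$; and since varying those weights sweeps $a$ across all of $[\alpha(0),\alpha(M)]$, the overall maximizer of the rate is a two-point distribution on $\{0,M\}$, as claimed.

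I expect the main obstacle to be the bookkeeping that genuinely isolates $m_1^X$ as the only free scalar, i.e.\ verifying that both $H(Y_1\mid Y_0)$ and the stationary marginal of $Y_0$ are functions of $m_1^X$ alone, so that the entire $a$-dependent block is constant on the level set $\{\mathbb{E}[\alpha(X)]=a\}$. Once that separation is in place, the concavity-endpoint step is routine, and the injectivity of $\alpha$ immediately translates the support condition back to $\{x_{\min},x_{\max}\}$.
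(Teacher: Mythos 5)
Your proposal is correct and follows essentially the same route as the paper's proof: fix the first moment $m_1^X$ so that $H(Y_1\mid Y_0)$ and the stationary law of $Y_0$ are unchanged, then use strict concavity of $H_2$ to show that the two-point law on $\{0,M\}$ with matching mean minimizes $\mathbb{E}[H_2(\alpha(X))]$ and hence minimizes $H(Y_1\mid X_1,Y_0)$. Your explicit chord function $\ell$ and the formula for the stationary marginal are just slightly more detailed versions of steps the paper leaves implicit.
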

\begin{proof}
We represent $\alpha(0)$ by $\alpha_{\min}$ and $\alpha(M)$ by $\alpha_{\max}$. Assume that $X^\infty$ is the i.i.d. sequence which achieves the i.i.d. capacity.  Define $\mu = m_1^{X_1}$. We define $\tilde{X}^\infty$ as an i.i.d. sequence where $\tilde{X}_i$ takes values $0$ and $M$ such that:
\begin{align*}
\mathbb{P}(\tilde{X}_i=0) = \frac{\alpha_{\max}-\mu}{\alpha_{\max}-\alpha_{\min}}\\
\mathbb{P}(\tilde{X}_i=M) = \frac{\mu-\alpha_{\min}}{\alpha_{\max}-\alpha_{\min}}
\end{align*}
Let $\tilde{Y}^\infty$ be the corresponding the output for input $\tilde{X}^\infty$. By Proposition \ref{lem:basic:entropy}, we know that $H(Y_1|Y_0)=H(\tilde{Y}_1|\tilde{Y}_0)$. Since $H_2(x)$ is a concave function, we have:
\begin{align}
 \frac{\alpha_{\max}-\mu}{\alpha_{\max}-\alpha_{\min}} H_2(\alpha_{\min})+ \frac{\mu-\alpha_{\min}}{\alpha_{\max}-\alpha_{\min}}H_2(\alpha_{\max})\\
=\mathbb{E}(H_2(\alpha(\tilde{X}_1)))
\end{align}
Since, $H_2$ is strictly concave function, it can be shown that
\begin{equation}
\mathbb{E}(H_2(\alpha(\tilde{X}_1)))\leq \mathbb{E}(H_2(\alpha(X_1))
\end{equation}
Therefore, by Proposition \ref{lem:basic:entropy}, $H(Y_1|X_1,Y_0)\geq H(\tilde{Y}_1|\tilde{X}_1,\tilde{Y}_0)$. Thus, we have $I(X^n;Y^n) \leq I(\tilde{X}^n; \tilde{Y}^n)$
\end{proof}

\begin{theorem} \label{thm:multi:receptor}
If $X^\infty$ is a discrete i.i.d. input that achieves the i.i.d. capacity, then $X_1$ should take at most $\frac{N+4}{2}$ values.

\end{theorem}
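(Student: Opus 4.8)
The plan is to reuse the reduction behind Theorem~\ref{thm:finite:support}, but to exploit the explicit analytic form of the binary entropy rather than treating it as an abstract functional. First I would record, via Proposition~\ref{lem:basic:entropy} and \eqref{eq:stationary:I}, that the i.i.d.\ rate can be written as
\begin{equation*}
\lim_{n\to\infty}\frac{1}{n} I(X^n;Y^n)=\Psi(m_1^{X_1},\dots,m_N^{X_1})-\frac{N\beta}{m_1^{X_1}+\beta}\,\mathbb{E}\!\left[H_2(\alpha(X_1))\right],
\end{equation*}
where $\Psi$ collects $H(Y_1|Y_0)$ and the $H_2(\beta)$ term, both depending on the input only through $m_{1:N}^{X_1}$ (using that the stationary mass $p_{Y_0,1}(B)=m_1^{X_1}/(m_1^{X_1}+\beta)$ depends only on $m_1^{X_1}$). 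The key point is that $\mathbb{E}[H_2(\alpha(X_1))]$ enters with a strictly negative coefficient. Hence if $X_1$ is capacity achieving, its law must \emph{minimize} $\mathbb{E}[H_2(\alpha(X_1))]$ among all inputs with the same moments $m_1^{X_1},\dots,m_N^{X_1}$; otherwise one could strictly raise the rate while keeping $\Psi$ and the coefficient fixed, a contradiction.

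Next I would pass to the variable $u=\alpha(x)$, which ranges over $[\alpha_{\min},\alpha_{\max}]\subset(0,1)$ and is in bijection with $x\in[0,M]$ since $\alpha$ is strictly increasing and continuous, so the number of support points is unchanged. Writing $\nu$ for the law of $u$, the problem becomes: minimize $\int H_2\,d\nu$ subject to $\int u^i\,d\nu=m_i^{X_1}$ for $i=0,\dots,N$. By the standard Lagrangian/complementary-slackness duality for this truncated moment problem, there is a polynomial $Q$ of degree at most $N$ with $R(u):=H_2(u)-Q(u)\ge 0$ on $[\alpha_{\min},\alpha_{\max}]$ and $\nu$ supported on the zero set of $R$. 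Since $R$ is $C^2$ and non-negative, every support point in the open interval is necessarily a \emph{double} zero of $R$, whereas a support point at an endpoint is at least a simple zero.

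The heart of the argument is then a zero-counting estimate. I would compute $H_2''(u)=-\frac{1}{\ln 2}\,\frac{1}{u(1-u)}$, so that the zeros of $R''=H_2''-Q''$ are exactly the roots of the polynomial $Q''(u)\,u(1-u)+\frac{1}{\ln 2}$, which has degree at most $N$ since $\deg Q''\le N-2$. Therefore $R''$ has at most $N$ zeros on the interval. Letting $k$ be the number of interior support points and $e\in\{0,1,2\}$ the number of endpoint support points, $R$ has at least $2k+e$ zeros counted with multiplicity; applying Rolle's theorem twice yields at least $2k+e-2$ zeros of $R''$, whence $2k+e\le N+2$. Combining this with $e\le 2$ and the total count $p=k+e$ (so $2k+e=k+p$ and $k\ge p-2$) gives $2p\le N+4$, i.e.\ $X_1$ takes at most $\frac{N+4}{2}$ values.

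I expect the main obstacle to be making the duality step fully rigorous: producing the certificate polynomial $Q$ and justifying that the capacity-achieving law is supported precisely on the zeros of $R$ with the stated multiplicities, including a careful treatment of possible mass at the endpoints $\alpha_{\min},\alpha_{\max}$, where only a simple zero and a one-sided optimality condition are available. The remaining ingredients—the reduction, the reparametrization $u=\alpha(x)$, and the Rolle-plus-degree counting—are routine once this structural fact is established.
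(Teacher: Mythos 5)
Your proposal is correct and follows the same overall strategy as the paper: reduce to minimizing $\mathbb{E}[H_2(\alpha(X_1))]$ over laws with prescribed moments $m_{1:N}^{X_1}$ (the paper makes the identical reduction via \eqref{eq:mutual}), extract conditions forcing $H_2$ minus a degree-$N$ polynomial to vanish at the support points and be critical at the interior ones, and then count zeros. The two places where you genuinely diverge are worth noting. First, where the paper writes finite-dimensional Lagrange stationarity in the variables $(p_i,\alpha_i)$, you invoke semi-infinite LP duality to get a global nonnegativity certificate $R=H_2-Q\ge 0$ with the optimal measure supported on its zero set. That is a strictly stronger structural claim, and as you anticipate it is the hardest step to make rigorous (Markov--Krein/Tchebycheff-system theory); note that for the zero count you only ever use $R=0$ at support points and $R'=0$ at interior ones, which already follow from the paper's first-order conditions, so the global certificate is more than you need. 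Second, your zero-counting lemma is different from and arguably cleaner than the paper's Lemma~\ref{lem:root2}: you differentiate twice and use the explicit identity $H_2''(u)=-\tfrac{1}{\ln 2}\tfrac{1}{u(1-u)}$ to reduce the zeros of $R''$ to the roots of the degree-$\le N$ polynomial $Q''(u)u(1-u)+\tfrac{1}{\ln 2}$, whereas the paper differentiates once and bounds the roots of $f'$ by $N+1$ via strict concavity of the even-order derivatives of $H_2$ plus iterated Rolle. Both counts give $2K-4\le N$, i.e.\ $K\le\frac{N+4}{2}$, and your endpoint bookkeeping ($2k+e\le N+2$, $e\le 2$) matches the paper's ``only $\alpha_1^*$ and $\alpha_K^*$ can be on the boundary.'' One small correction: the prefactor of $\mathbb{E}[H_2(\alpha(X_1))]$ is $N p_{Y_{0,1}}(U)=\tfrac{N\beta}{m_1^{X_1}+\beta}$ with $p_{Y_{0,1}}(B)=\tfrac{m_1^{X_1}}{m_1^{X_1}+\beta}$; you state the right coefficient but the parenthetical identifies it as the bound-state mass, so just make sure the unbound probability is the one multiplying the entropy term. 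This does not affect the argument since all that matters is that the coefficient is strictly negative and depends only on $m_1^{X_1}$.
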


\begin{proof}
Similar to the proof of Theorem \ref{thm:finite:support}, we assume that $Y_0$ has the stationary distribution. Suppose that $X_1$ takes values $x_1^*<x_2^*<\cdots<x_K^*$, with probability $p_1^*, p_2^*, \cdots, p_K^*$ respectively (it can be verified that an optimal solution always exists). Additionally, let $\alpha_i^*$ be $\alpha(x_i^*)$. It is easy to verify that we can choose the distribution so that $0<p_i^*<1$ for $1\leq i \leq K$. Using \eqref{eq:second:entropy} in \eqref{eq:stationary:I},
\begin{align}
&\lim_{n\to\infty}\frac{1}{n}I(X^n;Y^n)=H(Y_1|Y_0)-H(Y_1|X_1,Y_0)=\nonumber\\
&H(Y_1|Y_0) - Np_{Y_{0,1}}(B)H_2(\beta) - Np_{Y_{0,1}}(U)\mathbb{E}[H_2(\alpha(X_1))].\label{eq:mutual}
\end{align}
Since, $H(Y_1|Y_0)$ is a function of $m^{X_1}_{1:N}$, by maintaining these momentums constant via \eqref{eq:moments}, \eqref{eq:mutual} implies that $(\alpha_1^*, \cdots, \alpha_K^*, p_1^*, \cdots, p_K^*)$ should be a solution to the following optimization problem:
\begin{align}
\text{minimize} &~ \sum_{i=1}^{K}p_iH_2(\alpha_i) \nonumber\\
\text{subject to}& ~  \sum_{i=1}^K p_i \alpha_i^j=m^{X_1}_j, for~ 1\leq j \leq N\label{eq:moments}\\
&~\sum_{i=1}^K p_i = 1\nonumber\\
&~0\leq p_i\leq1,\nonumber \\
&~ \alpha(0) \leq \alpha_i \leq \alpha(M)\nonumber.
\end{align}
The above constraint optimization problem can be written as unconstrained setup (except last two constraints),
\begin{equation}
\mathcal{L} =  \sum_{i=1}^Kp_iH_2(\alpha_i)  + \lambda_0\sum_{j=1}^Kp_j +\sum_{i=1}^N \lambda_i\sum_{j=1}^Kp_j \alpha_j^i ,
\end{equation}
by the method of Lagrange multipliers. By taking partial derivatives and evaluating at the optimal solution $(\alpha_1^*, \cdots, \alpha_K^*, p_1^*, \cdots, p_K^*)$, we get:
\begin{align}
H_2(\alpha_i^*) + \sum_{j=0}^N \lambda_j\alpha^{*j}_i = 0, ~ 1\leq i \leq K\\
p_i^*(H_2'(\alpha_i^*) + \sum_{j=0}^N j\lambda_j\alpha^{*j-1}_i) = 0, ~ 2\leq i \leq K-1.
\end{align}
Notice that only $\alpha_1^*$ and $\alpha_K^*$ can be on boundaries. Moreover, as $p_i^*\neq 0$, we have:
\begin{align}
H_2'(\alpha_i^*) + \sum_{j=0}^N j\lambda_j\alpha^{*j-1}_i = 0, ~ 2\leq i \leq K-1.
\end{align}
If we define $f:[0, 1] \to \mathbb{R}$ so that $f(x) = H_2(x) +  \sum_{j=0}^N \lambda_j\alpha^{*j}_i$, then it is a smooth function and we have:
\begin{align}
f(\alpha_i^*)=0  , ~ 1\leq i \leq K\\
f'(\alpha^*_i)=0, ~ 2\leq i \leq K-1.
\end{align}
Since for each $i$ between 1 and $K-1$, $f(\alpha_i^*)=f(\alpha_{i+1}^*)=0$, according to the mean value theorem, there should be $\alpha^{**}_i \in (\alpha_{i}^*, \alpha_{i-1}^*)$ such that $f'(\alpha_i^{**}) = 0$. Therefore, $f'$ has at least $2K-3$ roots. By Lemma \ref{lem:root2} in Appendix, the number of the roots of $f'$ is at most $N+1$. Hence, $2K-3 \leq N+1$ which would result that $K\leq \frac{N+4}{2}$
\end{proof}

\begin{theorem} \label{thm:iid:stationary}
$C_{m}=C$.
\end{theorem}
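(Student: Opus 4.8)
The plan is to prove the two inequalities $C_m\le C$ and $C\le C_m$ separately, after first rewriting the operational capacity $C$ as an information quantity. Since every transition probability of the channel lies in $(0,1)$ (because $\alpha(x),\beta\in(0,1)$), the underlying finite-state channel --- whose state at epoch $i$ is simply the previous output $Y_{i-1}$ --- is indecomposable, so the influence of the initial state $Y_0$ decays uniformly. I would therefore invoke the finite-state-channel coding theorem to get
\begin{equation*}
C=\lim_{n\to\infty}\frac1n\max_{p_{X^n}}I(X^n;Y^n),
\end{equation*}
where the maximum is over \emph{all} (not necessarily Markov) input distributions on $\mathcal{X}^n$, the limit exists, and it is independent of the law of $Y_0$, exactly as in the argument closing the proof of Theorem~\ref{thm:finite:support}.

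The inequality $C_m\le C$ is the routine direction. For an input in $\mathcal{P}^n_m$, i.e.\ $X_i=\phi(S_i)$ for a stationary (and, without loss for the supremum, irreducible) Markov chain $S^\infty$, the triple $(S_i,X_i,Y_i)$ is a stationary ergodic finite-alphabet hidden-Markov process, hence information stable; its mutual information rate $\lim_n\frac1n I(X^n;Y^n)$ is therefore an achievable rate for the channel. Taking the supremum over all such Markov inputs shows every rate counted in $C_m$ is achievable, so $C_m\le C$.

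The substance of the theorem is the reverse inequality $C\le C_m$, and the idea is to approximate the optimal block input by a stationary Markov source. Fix a block length $n$ and let $p^*_{X^n}$ attain the maximum above up to $\epsilon$; since $I(X^n;Y^n)$ is weak-$*$ continuous in $p_{X^n}$ on the compact set of probability measures on $[0,M]^n$ (the output is finite and the channel map is continuous) and finitely supported measures are dense, I may take $p^*_{X^n}$ finitely supported at negligible cost. Now form $\tilde X^\infty$ by drawing successive length-$n$ blocks i.i.d.\ from $p^*_{X^n}$ and inserting a uniformly random initial phase in $\{0,\dots,n-1\}$ to make the process stationary. Because the block value has finitely many outcomes, $\tilde X^\infty$ is a deterministic function of a finite-state stationary Markov chain (whose state records the within-block position together with the pending block symbols), so $\tilde X^\infty\in\mathcal{P}_m$. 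Feeding $\tilde X^\infty$ through the channel, consecutive blocks interact only through the single boundary output, whose entropy is at most $N$ bits; by indecomposability this coupling costs only a constant $c$ of mutual information per block, so the rate of $\tilde X^\infty$ is at least $\frac1n\big(\max_{p_{X^n}}I(X^n;Y^n)-\epsilon-c\big)$. Letting $n\to\infty$ and then $\epsilon\to0$ yields $C_m\ge C$.

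The \emph{main obstacle} is this last step: quantifying the boundary loss incurred when the optimal block input is spliced into a block-i.i.d.\ source. One must show rigorously that breaking the dependence across block boundaries --- and tolerating an arbitrary entering law for the output of each block rather than the one implicit in $p^*_{X^n}$ --- changes $\frac1n I$ by only $O(1/n)$. This is exactly where indecomposability (uniform forgetting of the state $Y_{i-1}$) is used, via summing bounds of the type $|H(Y_i\mid Y_{i-1})-H(Y_i\mid Y^{i-1})|$ to $O(1)$ over a block. The remaining points --- existence of the block optimizer, the finite-support reduction, and information stability of the hidden-Markov output --- are comparatively standard and parallel the tools already used in Proposition~\ref{lem:basic:entropy} and Theorem~\ref{thm:finite:support}.
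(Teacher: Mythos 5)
Your argument is correct and is in substance the same as the paper's: the paper reduces to finite input sub-alphabets and invokes the Blackwell--Breiman--Thomasian theorems for indecomposable finite-state channels \cite{Blackwell}, whose proof is precisely the block-i.i.d.-to-Markov splicing with $O(1)$ boundary loss that you sketch for the direction $C\le C_m$, together with the information-stability/achievability argument you give for $C_m\le C$. The only cosmetic difference is that you unpack that construction (and the finite-support, weak-$*$ density reduction needed for the continuous alphabet) explicitly where the paper simply cites the reference.
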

\proofsketch
Assuming F as a finite subset of $\mathcal{X}$,  we can consider the channel similar to the one introduced in this paper, except that its input alphabet is limited to $F$. Analogues to Definitions \ref{def:capacity} and \ref{def:markov:capacity}, one can define $C^F$ and $C_{m}^F$ for this channel. In \cite[Theorem 3, 4]{Blackwell}, it is proved that if the channel is irreducible, then $C^F=C_{m}^F$ (the main assumption is that the input alphabet is finite). It is easy to verify that the sufficient condition stated in \cite[Theorem 1]{Blackwell} holds for the channel. Hence, we have:
\begin{align*}
C_m = \sup_{F\subset\mathcal{X},~ |F|<\infty}C_{m}^F=\sup_{F\subset\mathcal{X},~ |F|<\infty}C^F\leq C
\end{align*}
For the converse part, we need to show that for any achievable rate $R$, we can find sequence $P_{X^n} \in \mathcal{P}_m^n$ such that $\lim_{n\to\infty}I(X^n;Y^n)$ is arbitrary close to $R$. If we construct the sequence similar to the proof of \cite[Theorem 4]{Blackwell}, it is obvious that the input has finite states and therefore that construction works here as well.
\begin{remark} \textbf{Channel with Feedback}

In \cite{PeterAndrew}, authors proved that for a single-receptor case, feedback does not change the capacity. Intuitively, when we have a coding scheme which uses feedback, the encoding function depends on the output of the channel in the previous epochs. Since the channel has Markov structure, if we go back more that one epoch, we do not get useful information. Hence, one can modify the encoding function so that it would always assume  that the previous output was $U$ (i.e., the receptor was at the unbound state). If the assumption was correct, it is similar to the feedback strategy. Otherwise, the state of the channel is $B$, and the next output is independent of the input. Thus, in both cases, the feedback strategy and the modified strategy have the same result. Therefore, every rate which is achievable via feedback can be achieved without feedback.

However, for $N>1$, the above argument fails, because we have $2^N$ distinct outputs and except for the state where all receptors are bound, the output of the channel would depend on the input.
\end{remark}

\section{Conclusion}
In this paper, the capacity achieving probability measure of a molecular receptors in concentration based communication is characterized. We tried to find some properties of the distribution that achieves the capacity. Specially, we proved that discrete distributions can achieve both general capacity and i.i.d. capacity. As such developing and efficient algorithm to find the capacity reduces to a discrete space. Moreover, we proved that the general capacity is also achieved for discrete distributions.

\appendix
\begin{lemma} \label{lem:moment:problem}
Suppose $(\Omega, \mathcal{F})$ is a measurable space and $f_1, \cdots, f_K:\Omega\to\mathbb{R}$ are $K$ measurable functions. Then for any probability measure $P$, there exists another probability measure, $P'$, with support of size at most $K + 1$ such that for any $1\leq i \leq K$:
\begin{equation}
\mathbb{E}_P[f_i] = \mathbb{E}_{P'}[f_i].
\end{equation}

\end{lemma}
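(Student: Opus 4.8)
The plan is to translate the statement into a convex-geometry fact about the range of the vector-valued map $\Phi = (f_1,\dots,f_K)\colon \Omega\to\mathbb{R}^K$. Write $R = \Phi(\Omega)$ for the range and $v = (\mathbb{E}_P[f_1],\dots,\mathbb{E}_P[f_K]) = \mathbb{E}_P[\Phi]$ for the vector of the $K$ moments, assuming each $f_i\in L^1(P)$ so that $v$ is well defined. The key observation is that any admissible $P'$ arises from a convex combination of range points: if $v = \sum_{j=1}^{m}\lambda_j \Phi(\omega_j)$ with $\lambda_j\ge 0$, $\sum_j\lambda_j = 1$, and $\omega_j\in\Omega$, then $P' = \sum_{j=1}^m \lambda_j\,\delta_{\omega_j}$ is supported on at most $m$ points and satisfies $\mathbb{E}_{P'}[f_i] = \sum_j\lambda_j f_i(\omega_j) = v_i = \mathbb{E}_P[f_i]$ for every $i$. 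Hence it suffices to prove that $v$ is a convex combination of at most $K+1$ elements of $R$.

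Given that reduction, I would split the argument in two. First, I would establish the membership $v\in\operatorname{conv}(R)$, the convex hull of the genuine range (not merely its closure). Second, I would invoke Carathéodory's theorem in $\mathbb{R}^K$, which expresses any point of $\operatorname{conv}(R)$ as a convex combination of at most $K+1$ points of $R$; this yields exactly the claimed support bound $K+1$.

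For the membership step I would argue by contradiction together with a dimension reduction. Suppose $v\notin \operatorname{conv}(R)$. Working inside the affine hull of $R$, so that the convex set is full-dimensional there, a supporting/separating hyperplane yields a nonzero $c\in\mathbb{R}^K$ with $\langle c,v\rangle\ge \langle c,\Phi(\omega)\rangle$ for all $\omega\in\Omega$. Taking expectations gives $\langle c,v\rangle = \mathbb{E}_P[\langle c,\Phi\rangle]\le \langle c,v\rangle$, which forces equality; since $\langle c,\Phi\rangle\le\langle c,v\rangle$ holds pointwise, this means $\langle c,\Phi\rangle = \langle c,v\rangle$ $P$-almost everywhere. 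Thus $P$ concentrates on the set $H=\{\omega:\langle c,\Phi(\omega)\rangle=\langle c,v\rangle\}$, and $v$ lies on the corresponding affine hyperplane of $\mathbb{R}^K$. On $H$ one coordinate of $\Phi$ is an affine function of the remaining $K-1$, so restricting $P$ to $H$ reduces the problem to $K-1$ functions. An induction on $K$, with base case $K=1$ handled directly (there $v$ lies between $\inf$ and $\sup$ of $f_1(\Omega)$, and when $v$ equals one of these extremes, integrability forces $f_1=v$ $P$-a.e., so the extreme value is actually attained), then gives $v\in\operatorname{conv}(\Phi(H))\subseteq\operatorname{conv}(R)$, contradicting the assumption.

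I expect the main obstacle to be precisely this membership step: for a general, possibly unbounded and non-closed range one must land in $\operatorname{conv}(R)$ itself rather than in $\overline{\operatorname{conv}(R)}$, since only values actually attained by $\Phi$ can be turned into atoms of $P'$. The almost-everywhere collapse onto a supporting hyperplane is what rescues the boundary case and drives the induction. Note that in the paper's application $\Omega=[0,M]$ is compact and $\Phi$ is continuous, so $R$ is compact and this subtlety disappears; the argument above is what covers the general measurable setting in which the lemma is stated.
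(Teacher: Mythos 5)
The paper does not actually prove this lemma: it is stated in the appendix without proof, with the Mulholland--Rogers entry in the bibliography apparently intended as the source. So there is no in-paper argument to compare against, and your proposal has to stand on its own. It does: the reduction to showing that the moment vector $v=\mathbb{E}_P[\Phi]$ lies in $\operatorname{conv}(\Phi(\Omega))$, followed by Carath\'eodory's theorem in $\mathbb{R}^K$, is exactly the right skeleton, and your treatment of the membership step (proper separation, the a.e.\ collapse onto the supporting hyperplane, and induction on the dimension) is the standard correct way to get into the convex hull itself rather than its closure. You are also right that the integrability hypothesis $f_i\in L^1(P)$ must be added for the statement to make sense, and that in the paper's application ($\Omega=[0,M]$ compact, $\Phi$ continuous) the whole subtlety evaporates. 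Two small points you should make explicit when writing this up: first, before separating inside $\operatorname{aff}(R)$ you need $v\in\operatorname{aff}(R)$, which follows by applying the same expectation argument to any affine functional that is constant on $R$; second, the separation must be proper (the hyperplane must not contain all of $\operatorname{conv}(R)$), since otherwise the dimension does not drop and the induction stalls --- working in the affine hull, as you do, is precisely what guarantees this, but it deserves a sentence. Neither is a gap in the idea, only in the level of detail.
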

\begin{lemma}\label{lem:root2}
For every $a_0, a_1, \cdots, a_n\in \mathbb{R}$, let $g(x) = H_2(x)+\sum_{i=0}^na_i  x ^i$. Then $g'(x)$ has at most $n+1$ roots in $[0, 1]$.
\end{lemma}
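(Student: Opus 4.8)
The plan is to reduce the transcendental equation $g'(x)=0$ to a polynomial equation by differentiating once more and clearing the logarithmic singularity, and then to count roots via Rolle's theorem. The first step is to compute the derivative of the binary entropy: a direct calculation gives $H_2'(x)=\log_2\frac{1-x}{x}$, so that
\[
g'(x)=\log_2\frac{1-x}{x}+\sum_{i=1}^n i\,a_i x^{i-1}=\frac{1}{\ln 2}\bigl(\ln(1-x)-\ln x\bigr)+P(x),
\]
where $P$ is a polynomial of degree at most $n-1$. Since $\log_2\frac{1-x}{x}\to+\infty$ as $x\to 0^+$ and $\to-\infty$ as $x\to 1^-$, every root of $g'$ must lie in the open interval $(0,1)$; in particular the endpoints contribute no roots, and $g'$ is smooth on $(0,1)$ so that Rolle's theorem applies there without difficulty.

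Next I would differentiate once more, the point being to isolate a term that can be cleared. This yields
\[
g''(x)=-\frac{1}{\ln 2}\cdot\frac{1}{x(1-x)}+P'(x),
\]
where $P'$ has degree at most $n-2$. Multiplying by $x(1-x)$, which is nonzero on $(0,1)$, produces
\[
h(x)\triangleq x(1-x)\,g''(x)=-\frac{1}{\ln 2}+x(1-x)\,P'(x),
\]
a genuine polynomial of degree at most $n$. Because $h(0)=-1/\ln 2\neq 0$, the polynomial $h$ is not identically zero, so it has at most $n$ roots; and on $(0,1)$ the roots of $h$ and of $g''$ coincide. Hence $g''$ has at most $n$ roots in $(0,1)$.

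Finally I would combine these facts through Rolle's theorem: if $g'$ had $r$ distinct roots in $(0,1)$, then between each pair of consecutive roots there is a distinct zero of $g''$, producing at least $r-1$ distinct roots of $g''$. Therefore $r-1\le n$, i.e. $r\le n+1$, which is the claim. The step I expect to require the most care is the degree bookkeeping together with the verification that $h\not\equiv 0$: this is exactly where the logarithmic singularity of $g'$ is converted into the harmless constant $-1/\ln 2$, and it is what prevents the cleared polynomial from degenerating and thereby guarantees the finite root count.
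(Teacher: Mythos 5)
Your proof is correct, but it takes a genuinely different route from the paper's. The paper differentiates all the way up to $g^{(n-1)}$ or $g^{(n)}$, invokes the claim that $H_2^{(k)}$ is strictly concave for even $k$ (so that the surviving affine part of the polynomial leaves a strictly concave function with at most two roots), and then walks the root count back down through $n-1$ or $n$ applications of the mean value theorem. You instead stop at the second derivative, where the transcendental part collapses to the rational term $-\tfrac{1}{\ln 2}\cdot\tfrac{1}{x(1-x)}$, clear the denominator to obtain the polynomial $h(x)=x(1-x)g''(x)$ of degree at most $n$, observe $h(0)=-1/\ln 2\neq 0$ so $h\not\equiv 0$, and apply Rolle exactly once. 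Your version buys several things: it avoids having to verify the concavity of all even-order derivatives of $H_2$ (a fact the paper only asserts ``by direct calculation''), it replaces an $n$-fold Rolle cascade with a single step plus the fundamental theorem of algebra, and your handling of the endpoints (noting $g'\to\pm\infty$ at $0^+$ and $1^-$, so all roots lie in the open interval where Rolle applies) is a detail the paper glosses over. The paper's approach is shorter to state but leans on an unproved auxiliary concavity claim; yours is fully self-contained and even yields a slightly sharper count for small $n$ (e.g., at most one root when $n=1$). Both arrive at the required bound $2K-3\le n+1$ downstream, so your lemma is a valid drop-in replacement.
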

\proofsketch
By direct calculation, one can show that for even $k$, $H_2^{(k)}(x)$ is a strictly concave function over $[0, 1]$. Hence, either $H^{(n-1)}(x)$ or $H^{(n)}(x)$ are strictly concave. Since, every line crosses the graph of a concave function in at most 2 points, either $g^{(n)}(x)$ or $g^{(n-1)}$ has at most 2 roots. Thus, by mean value theorem, $g'(x)$ has at most $n+1$ roots.



%

\end{document}